\newtheorem{algorithm}{Algorithm}%[section]
\begin{document}
\title{Incentive Non-Compatibility of Optimistic Rollups}
%
%\titlerunning{Abbreviated paper title}
% If the paper title is too long for the running head, you can set
% an abbreviated paper title here
%
\author{Daji Landis\inst{1}}
\authorrunning{D. Landis}
% First names are abbreviated in the running head.
% If there are more than two authors, 'et al.' is used.
%
\institute{New York University} 
% \email{dajilandis@nyu.edu}}
%
\maketitle              % typeset the header of the contribution
\begin{abstract}
Optimistic rollups are a popular and promising method of increasing the throughput capacity of their underlying chain. These methods rely on economic incentives to guarantee their security.  We present a model of optimistic rollups that shows that the incentives are not aligned with the expected behavior of the players, thus potentially undermining the security of existing optimistic rollups.  We discuss some potential solutions illuminated by our model.
\begin{keywords}
Rollup, Blockchain, Scaling 
\end{keywords}
\end{abstract}
\section{Introduction}
Blockchains are public ledgers used to record transactions and execute decentralized programs. Rollups are a `Layer 2' (L2) scaling solution that operate on top of a main blockchain \cite{scaling_survey}.  They add throughput capacity by moving the computations necessary for transactions off the mainchain on to the L2 chain. The data is still posted on the mainchain; a fact that constitutes the main difference between rollups and other scaling solutions. There are two major types of rollups. Optimistic rollups function by `optimistically' expecting that transactions are being committed to a blockchain accurately and honestly in general \cite{scaling_survey, op}.  The execution of these commitments is overseen by other nodes, but are not necessarily systematically checked.  These differ from zero knowledge rollups, which require every commitment to be proved correct every time.

Optimistic rollups function by having two types of nodes. There are nodes that collect and perform the calculations for the bundle of transactions. They must put up a stake to participate. We will call these nodes \emph{aggregators}, but note that this role is sometimes also referred to as the \emph{sequencer}. The work done by these nodes is then checked by other nodes, \emph{validators} that watch the chain and submit \emph{fraud proofs} when they suspect an inaccuracy. These validators must stake currency before they can press a challenge and lose their stake if this challenge is erroneous. If a challenge is successful, the validator receives some of the aggregator's stake and the rest is burned. If a fraud proof is brought, it must be settled on the mainchain, which can be costly, so the system is intended to disincentivize unnecessary proofs \cite{scaling_survey, op}.  

Currently, the most popular implementations of optimistic rollups are Arbitrum and Optimism, with approximately 65\% and 19\% of the market share, respectively \cite{mourya_ethereum_2023}. They differ in their precise implementation of the fraud proof procedure. Neither implementation has a decentralized aggregator, although they both state in their roadmap that decentralization is a priority and coming soon. Moving forward, we will assume that the aggregator is decentralized, but will remain agnostic on how precisely the fault proof is executed.   

Optimistic rollups require users to wait to withdraw their deposits until a `challenge period' has ended.  This allows time for the validator nodes to check the transactions. This period is usually seven days. There are proposed solutions to this issue, for example the use of liquidity pools to allow users to access funds before their transactions are final, but the waiting period will always be an inherent part of the protocol.  A more pressing and well-known problem is that of the `verifier's dilemma', which observes that the verifier has no incentive to execute its role if there are never any errors and therefore no rewards. In this paper, we present a formal model of the verifier's dilemma and discuss potential solutions that our model illuminates.

\subsection{Related Work}
The verifier's dilemma is a frequently discussed issue, with much of that discussion happening in informal blogs.  One of the aims of this paper is to present a formal model of an already well-established line of informal inquiry.  A concurrent work by Li\cite{li} engages with a similar question.  In the paper, the author presents a model of a potential attack on the well functioning of optimistic rollups and concludes that their current design is not secure. Here, we consider a mechanism to be \emph{secure} only if there is a unique pure equilibrium that is the intended behaviour.   There are some substantial differences between their model and ours.  We assume that the choice on the part of the validator of whether or not to perform the search is distinct from the choice of whether or not to challenge, i.e. a validator can blindly challenge without having performed a search. We also assume different payoffs. If a dishonest transaction is caught, we assume the aggregator does not get the utility of the attack, that is, the faulty commitment is revoked.  So a caught dishonest aggregator has their stake slashed while also losing the utility of the attack.  Furthermore, we assume that an erroneous challenge results in a penalty to the validator.

While Li's model and the model we present are different, the general conclusion, that optimistic rollups are not incentive compatible, is similar.  We believe that our model captures the phenomenon in a more nuanced and accurate way.  The paper by Li \cite{li} also extends their model to a case with multiple of each type of player. A similar analysis with many players was done by researchers at Offchain Labs in another concurrent work.  They also find an undesirable equilibrium when there is only one of each player \cite{mamageishvili}.  Our model alone captures the role of search costs and the ability to randomly challenge.  We show that the desired behaviour is not incentive compatible, even with only one of each player.  This work shows the gravity of the verifier's dilemma in a succinct but accurate model, which constitutes an important tool the community has lacked.

Other works use rational and economic augments to investigate other blockchain designs.  This type of analysis has been applied to Byzantine Fault Tolerance \cite{bft2, citation-key}, proof-of-stake \cite{saleh, eth_three_attacks}, sharding \cite{sharding}, and proof-of-work systems \cite{pow, btc, btc_2}.  A survey paper details diverse applications of game theory to blockchain scaling mechanisms \cite{scaling_survey}. There is extensive research into the improvement and functioning of scaling solutions \cite{committee_scaling}. The work in Thibault et al \cite{scaling_survey} discusses many types of blockchain scaling solutions and we follow their account of the functioning of optimistic rollups in the setup of our model. 

\section{Our Model}
Our model is set up with three players: the aggregator $A$, the validator $V$. The owners of the transactions are not taken into account, but they pay some fee $f$, which goes to the aggregator as compensation. In order to commit the transaction, the aggregator must perform some underlying computational work that incurs cost, $w$. This is a computational cost, not a monetary cost, but we will assume that this and other computational costs are easily transformed into monetary costs. We will assume that the agents enjoy simple linear utility. Thus, in this stylized model of a transaction, we have the following utilities, if all goes well: $ A: f-w$ and $V:0$.  The algorithm functions as follows.
\begin{algorithm} \label{optimistic}
\textsc{\textup{Optimistic Rollup Protocol}}.
\begin{enumerate}
    \item Transactor sends their payment $f$ and their transaction information to the aggregator $A$.
    \item The aggregator stakes $s_A$ 
    \item Then $A$ performs the computation of the transaction, with cost $w$ and commits the result.
    \item Then $V$ monitors the integrity of the transaction:
    \begin{enumerate}
        \renewcommand{\theenumii}{\theenumi.\arabic{enumii}}
        \item If $V$ finds an error, it stakes $s_V$ and the proof is checked.
        \begin{enumerate}
        \renewcommand{\theenumiii}{\theenumii.\arabic{enumiii}}
            \item If the challenge is found to be erroneous, the payment and stake can be withdrawn by $A$ and $V$'s stake is slashed.
            \item If the challenge is found to be correct, half the aggregator stake $s_A$ is given to $V$ and the other half is burned.
        \end{enumerate}
        \item If $V$ does not find an error after seven days, the aggregator is able to withdraw $f$ and $s_A$.
    \end{enumerate}
\end{enumerate}
\end{algorithm}
%\end{mdframed}
% \begin{center}
% \textbf{Algorithm 1: Optimistic rollup}
% \end{center}

With this algorithm in hand, we can begin clarifying the incentives for each party.  Initially, we will assume that the validator nodes incur no cost to keep abreast of the activities on chain, but we will introduce this cost in the next section. Thus, we are assuming that $V$ always searches for free, and therefore has perfect information.

One important question is what happens when something goes wrong.  If a lazy aggregator acts dishonestly, they might just abscond with the fee $f$.  Rather than being content with just the fee, the aggregator could siphon currency from a smart contract. A particularly large sum might be siphoned from a bridge contract. One attack, through different means than those detailed here, was able to steal \$326M from a Solana bridge contract \cite{bridge}. We call this potentially large benefit $z$.  Attempting this attack will require about the same computational work $w$. With these assumptions, we can readily write down the extensive form game and find the equilibrium. 

\begin{lemma} \label{thm:optimistic_perf_info}
    The optimistic rollup paradigm as described in Algorithm 1 yields ($A$ behaving honestly) and ($V$ posing no challenge) as the pure-strategy Nash equilibrium when $V$ has perfect information and all players are rational and risk neutral. 
\end{lemma}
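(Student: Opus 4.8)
The plan is to treat Algorithm~\ref{optimistic} as a two-stage extensive-form game of perfect information and solve it by backward induction. In stage one the aggregator $A$ chooses to behave honestly (committing the correct result, at computational cost $w$) or dishonestly (committing a faulty result that siphons the amount $z$, also at cost $w$). In stage two the validator $V$, who by assumption watches the chain for free and therefore knows which action $A$ took, chooses at each of its two singleton information sets whether to post a challenge. Since there is no exogenous randomness, risk neutrality plays no role here beyond ensuring that each player maximizes its linear monetary payoff; only the signs of the relevant deviation gains will matter.

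First I would read the payoffs at the four terminal histories directly off the algorithm. Honest play with no challenge gives $A:\,f-w$ and $V:\,0$. Honest play with a challenge is an erroneous challenge, so by step~4.1.1 $V$'s stake is slashed and $A$ recovers $f$ and $s_A$: $A:\,f-w$, $V:\,-s_V$. Dishonest play with a challenge is a correct challenge, so by step~4.1.2 the faulty commitment is revoked and $A$ loses $s_A$ with nothing gained from the attack, while $V$ collects half of $s_A$: $A:\,-w-s_A$ (possibly $f-w-s_A$ if the fee is also clawed back), $V:\,s_A/2$. Dishonest play with no challenge lets $A$ abscond: $A:\,z-w$ (plus the fee $f$), $V:\,0$.

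Next I would solve $V$'s stage. At the information set following an honest commitment, $V$ compares $0$ against $-s_V<0$, so not challenging is the strict best response; at the information set following a dishonest commitment, $V$ compares $0$ against $s_A/2>0$, so challenging is the strict best response. Folding these back, $A$ compares the honest payoff $f-w$ with the payoff $-w-s_A$ from deviating (the deviation is detected and punished with certainty), and since $s_A>0$ (and $f\ge 0$) honesty is the strict best response. This identifies the unique subgame-perfect equilibrium: $A$ honest, and $V$ not challenging on the realized path while poised to challenge a dishonest $A$ off path --- exactly the profile in the statement. Checking the remaining pure profiles shows this equilibrium is also the unique pure Nash equilibrium: any profile in which $V$ never challenges lets $A$ cheat profitably, any profile in which $V$ challenges an honest $A$ has $V$ deviate profitably, and any profile with $A$ dishonest either has $A$ deviate to honesty (if $V$'s off-path plan is to challenge) or has $V$ deviate to challenging (if it is not).

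I do not expect a genuine obstacle; the only things to handle with care are (i) writing $V$'s strategy as a full contingent plan rather than just its on-path move, so that the off-path threat deterring $A$ is explicit and credible, and (ii) the precise accounting of the fee $f$ in the ``dishonest and caught'' history, which in any case leaves the inequality $f-w>-w-s_A$ intact. It is worth emphasizing that the whole argument rests on $V$'s perfect information, which makes the threat to challenge a dishonest $A$ credible --- precisely the assumption relaxed in the next section.
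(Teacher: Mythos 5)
Your proof is correct and follows essentially the same route as the paper: write down the four terminal payoffs of the extensive-form game and apply backward induction, with $V$ strictly preferring not to challenge an honest $A$ and strictly preferring to challenge a dishonest one, so that $A$'s honest payoff $f-w$ dominates the punished dishonest payoff. Your additional care about $V$'s off-path strategy and the uniqueness check among pure profiles is a modest strengthening of the paper's argument, not a different approach.
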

\begin{proof} %\normalfont
We have the game in extensive form as follows, with utility vectors of the form $(A, V)^T$:
\begin{center}
\begin{istgame}[font=\footnotesize]
\xtShowEndPoints
\xtdistance{15mm}{40mm}
\istroot(0)[initial node]{A}
    \istb{Honest}[al]
    \istb{Dishonest}[ar]
\endist

% left child
\xtdistance{15mm}{20mm}
\istroot(1)(0-1)<135>{V}
    \istb{Challenge}[al]{
    \begin{array}{c}
         A: \\
         V:
    \end{array}
    \begin{pmatrix}
        f-w, \\ -s_v 
    \end{pmatrix}
    }
    \istb{No}[ar]{\begin{pmatrix}
        f-w, \\ 0 
    \end{pmatrix}}
\endist 

%\xtdistance{15mm}{10mm}
\istroot(2)(0-2)<above right>{V}
    \istb{Challenge}[al]{\begin{pmatrix}
        -s_A -w,\\ \frac12 s_A 
    \end{pmatrix}}
    \istb{No}[ar]{\begin{pmatrix}
        f+z-w ,\\ 0 
    \end{pmatrix}}
\endist
\end{istgame}

Game 1: Game tree corresponding to Algorithm 1
\end{center}

We implicitly assume that each player also has an outside option wherein they do not partake and receive a utility of $0$. Using backwards induction, we readily see that if $A$ is honest, $V$'s best response is to not pose a challenge.  If $A$ is not honest, $V$'s best response is to pose a challenge.

Given these strategies for $V$, $A$ can see that honest behaviour will result in a utility of $f -w$, which is positive under our assumptions, and dishonest behaviour will result in a utility of $-s_P$.  Thus, $A$'s best strategy is to behave honestly. Thus ($A$ behaving honestly) and ($V$ posing no challenge) is the pure-strategy Nash equilibrium, as desired. \qed
\end{proof}

\section{Search costs}
Even without introducing search costs, it is easy to read off the fact that, given that honest behavior on the part of the prover is rational, $V$ will have an expected utility of $0$. Why then would a validator participate in the protocol? We observe participation in real implementations and there are good reasons to participate other than the potential of the reward. These include the fact that the validator nodes might be run by parties with an interest in the rollup protocol.  Frequent users, for example, might want to keep an eye on the blockchain in order to be assured of the security of their own transactions.  Running a node gives access to the shared ledger, which grants access to data that is important for decentralized applications or frequent traders.  Running a validator node is not as computationally intensive as other types of nodes in other applications. In order to model this accurately, we introduce a small search cost, $x$.  If the validator incurs this search cost, they have perfect information about $A$'s honesty.  

In order to model this addition, we introduce another action where the validator decides whether they intend to invest in the search cost or not. In the following, the dotted lines represent information sets. In context, $A$ will not know if $V$ is performing searches or not, and, if $V$ has chosen (No Search), they do not know if $A$ is honest.  In order to simplify the calculations, we normalize the honest behavior of the aggregator to $0$.  That is set $f-w=0$. We will also simplify the loss that $A$ faces when they misbehave to be just the relatively large slashed stake $s_A$ and not the $w$ term. 

The utility vectors are of the form $(A,V)^T$.  

\centerline{
\overfullrule=0pt % prevent blackbox for overfull
%\begin{center}
\begin{istgame}[font=\footnotesize]
\xtShowEndPoints
\xtdistance{15mm}{80mm}
\istroot(0)[initial node]{V}
  \istb{\text{\emph{No Search}}}[al]       
  \istb{Search}[ar]
  \endist
\xtdistance{15mm}{40mm}
\istroot(1)(0-1)<135>{A}
  \istb{Honest}[al]   
  \istb{Dishonest}[ar] 
  \endist
\istroot(2)(0-2)<above right>{A}
  \istb{Honest}[al]
  \istb{Dishonest}[ar]
  \endist
\xtdistance{15mm}{20mm}
\istroot(3)(1-1)<135>{V}
  \istb{Challenge}[al]{
      \begin{array}{c}
         A: \\
         V:
    \end{array}
  \begin{pmatrix}0, \\-s_V \end{pmatrix}} 
  \istb{No}[ar]{ \begin{pmatrix}
      0 ,\\ 0
  \end{pmatrix}}
  \endist
\istroot(4)(1-2)<above right>{V}
  \istb{Challenge}[al]{ \begin{pmatrix}
      -s_A ,\\ \frac12 s_A
  \end{pmatrix} }
  \istb{No}[ar]{\begin{pmatrix}
      z, \\ 0
  \end{pmatrix}}
  \endist
\istroot(5)(2-1)<135>{V}
  \istb{Challenge}[al]{\begin{pmatrix}
      0, \\-s_V-x
  \end{pmatrix}} 
  \istb{No}[ar]{\begin{pmatrix}
      0,\\-x
  \end{pmatrix}}
  \endist
\istroot(6)(2-2)<above right>{V}
  \istb{Challenge}[al]{\begin{pmatrix}
      -s_A, \\ \frac12 s_A-x
  \end{pmatrix}}
  \istb{No}[ar]{\begin{pmatrix}
      z, \\-x
  \end{pmatrix}}
  \endist
\xtInfoset(3)(4)
\xtInfoset(1)(2)

\end{istgame}
%\end{center}
}
\begin{center}
  Game 2: Extensive form game for optimistic rollup with search costs where the dotted lines indicate information sets.  
\end{center}

The right tree corresponds to the game we have already discussed, except that $A$ does not know that $V$ is performing searches. We now assign three probabilities to the choices of the players: $V$ does no search, and makes choices \emph{blindly}, with probability $b$ and then challenges with probability $g$; and $A$ is honest with probability $h$.  All are between $0$ and $1$. Thus we can read off the utility functions, starting with $A$. If the aggregator is honest, they will receive $0$ utility, regardless of $V$'s actions.  Recall this is due to our normalization of $f-w$ to $0$.  When $V$ is blind and is challenging with a probability $g$, dishonesty results in utility $g(-s_A)+(1-g)z$.  When $V$ is not blind, they challenge if and only if $A$ is indeed dishonest, yielding $-s_A$ utility for $A$.  Combining these pieces, we get, 
\begin{align}
    A: \quad h\cdot 0 + (1-h)(b\cdot(g\cdot(-s_A)+(1-g)z)+(1-b)(-s_A)) \label{a_utility}
\end{align}

When $V$ chooses no search, we can see that the validator reaps $0$ utility if they do not challenge. In the challenge case, if $A$ is honest, $V$ has their stake burned for utility $-s_V$.  In the dishonest case, they receive $\frac12 s_A$.  Weighting this with the probabilities, we find the left hand side has expected utility $g\cdot\left(h\cdot(-s_V) + (1-h)\frac12 s_A\right)$.
On the right hand side, $V$ can now make an informed decision.  If $A$ is honest, $V$ will not challenge, resulting in utility $-x$ and if $A$ is dishonest, $V$ will always challenge for utility $\frac12 - x$.  Weighting these terms by the honesty probability yields the right hand side utility of $ h\cdot(-x)+ (1-h)\left(\frac12 s_A -x\right)$, which can be simplified to $(1-h)\left(\frac12 s_A \right)-x$. To get the whole expected utility, we weight both sides of the tree by the probability of no search $b$, which gives,
\begin{align}
    V: \quad bg\cdot\left(h\cdot(-s_V) + (1-h)\frac12 s_A\right)+(1-b)\left((1-h)\frac12 s_A -x\right) \label{v_utility}
\end{align}

\subsection{Main Result}
It suffices to show that there exist values for which the equilibrium is not the desired outcome of full search and honesty. The fully general proof is available in the online version, but the example here more readily illuminates the key interactions.  We start by noting that we should have the following ordinal constraint based on our context: $x<s_V\leq s_A \ll z$.  Take $s_A=s_V=s$, $x=\frac{s}{24}$, and $z=24s$, convenient choices among many that clearly obey these constraints.
\begin{lemma}
    A rational, risk neutral $A$ will be indifferent to being honest if $s_A=s_V=s$ and $z=24s$ and $g= 1- \frac{1}{25b}$.
\end{lemma}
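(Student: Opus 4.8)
The plan is to read off $A$'s indifference condition directly from the utility expression \eqref{a_utility}. Being ``indifferent to being honest'' means that, at $A$'s information set (where $A$ does not know whether $V$ searches), the expected payoff of the action \emph{Honest} equals that of \emph{Dishonest}. Since we normalized $f-w=0$, the \emph{Honest} branch yields $0$ with certainty, so the condition is precisely that the \emph{Dishonest} continuation value vanishes:
\[
  b\bigl(g(-s_A)+(1-g)z\bigr)+(1-b)(-s_A)=0 .
\]

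Next I would substitute the chosen parameters $s_A=s_V=s$ and $z=24s$, divide through by $s>0$, and collect the terms that are linear in $g$. The constant $-1$ coming from the blind/non-challenging and the no-search branches combines with $24b$ and $b$, and the $g$-terms give $-b-24b=-25b$; the whole equation collapses to $25b(1-g)=1$, whence $g=1-\tfrac{1}{25b}$, exactly as claimed.

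The only point requiring care — more a sanity check than a genuine obstacle — is to confirm that the resulting $g$ is a legitimate probability. Since $\tfrac{1}{25b}>0$ we always have $g<1$, and $g\ge 0$ exactly when $b\ge \tfrac{1}{25}$; I would state this feasibility range explicitly, as it constrains how often $V$ is allowed to search in the subsequent equilibrium analysis. The other thing to be careful about is the interpretation of ``indifferent'': it is equality of the two \emph{conditional} continuation payoffs at $A$'s decision node, not a statement about the full unconditional expression \eqref{a_utility} (which additionally carries the $h$ and $1-h$ weights and is identically zero at $h=1$ regardless of $g$). Beyond that, the lemma is a one-line algebraic consequence of \eqref{a_utility} once the normalization $f-w=0$ and the numeric choices for $s_A$, $s_V$, $z$ are in place, so no difficult step remains.
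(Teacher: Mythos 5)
Your proposal is correct and follows essentially the same route as the paper: set the dishonest continuation payoff $b\bigl(g(-s)+(1-g)24s\bigr)+(1-b)(-s)$ equal to the normalized honest payoff of $0$, simplify to $25bs-25bgs-s=0$, and solve for $g$, with the same feasibility observation that $b$ must exceed $\tfrac{1}{25}$. Your added remark on the precise meaning of ``indifferent'' is a harmless clarification, not a divergence.
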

\begin{proof}
    From \cref{a_utility}, we can see that dishonest behavior results in utility of $b\cdot(g\cdot(-s)+(1-g)24s)+(1-b)(-s)$, which is easily simplified to $25bs-25bgs-s$.  Since we want to find the values of $b$ and $g$ for which $A$ is indifferent between honesty and dishonesty, we set this expression equal to 0, the utility of honesty.  We solve for $g$ first, readily getting,
    \begin{align}
        g=1-\frac{1}{25b} \label{g_simple}
    \end{align}
    Given that $g$ must be positive, we can read off that $b>\frac{1}{25}$. \qed 
    \end{proof}

\begin{theorem} \label{main}
    Given $s_A=s_V=s$, $x=\frac{s}{24}$, and $z=24s$, there are values of $b$, $g$, and $h$ that admit a mixed equilibrium that is not the same as that found in Lemma 2, if both $A$ and $V$ are rational and risk neutral.
\end{theorem}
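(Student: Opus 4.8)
The plan is to exhibit one explicit mixed-strategy profile $(h,b,g)$, check directly that it is a Nash equilibrium of Game~2, and point out that it is neither pure nor the intended ``always search, always honest'' behaviour, so the mechanism fails the security notion of the introduction. The candidate I would use is $h=\tfrac{11}{12}$, $b=\tfrac1{25}$, $g=0$, and I would discover it by chaining together the indifference conditions rather than guessing.

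First I would read off $V$'s best-response structure from \cref{v_utility}. For fixed $h$ the validator's payoff is linear in $b$, and within the blind information set challenging is worth $h(-s_V)+(1-h)\tfrac12 s_A$ against $0$ for not challenging; hence the blind-challenge probability $g$ is a corner ($g=1$ if $h<\tfrac13$, $g=0$ if $h>\tfrac13$) unless $h=\tfrac13$. Since we want $A$ to randomize, \cref{a_utility} forces $A$'s indifference, i.e.\ the Lemma~2 relation $g=1-\tfrac1{25b}$; together with $g=0$ this pins $b=\tfrac1{25}$. Then, for $V$ to be willing to mix over \textit{Search}/\textit{No Search}, the two branches of \cref{v_utility} must tie: with $g=0$ the \textit{No Search} branch is worth $0$, so the \textit{Search} payoff $(1-h)\tfrac12 s_A-x$ must also vanish, which with $s_A=s$, $x=\tfrac{s}{24}$ gives $h=\tfrac{11}{12}$. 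Note $\tfrac{11}{12}>\tfrac13$ is consistent with having chosen $g=0$, and $b=\tfrac1{25}\in(0,1)$, $h=\tfrac{11}{12}\in(0,1)$, so the profile is legitimate.

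Next I would verify it is a genuine equilibrium. Substituting $h=\tfrac{11}{12}$, $b=\tfrac1{25}$, $g=0$ into \cref{a_utility} gives $0$, the honest payoff, so $A$ is indifferent and $h=\tfrac{11}{12}$ is a best response; substituting into \cref{v_utility} gives $0$, and since against $h=\tfrac{11}{12}$ the coefficient $h(-s_V)+(1-h)\tfrac12 s_A$ is negative and the \textit{Search} payoff is exactly $0$, the validator can never clear $0$, so $(b,g)=(\tfrac1{25},0)$ is a best response. Because both players randomize this is a mixed equilibrium, and it differs from the target profile $(h,b)=(1,0)$; indeed one should remark that $(1,0)$ is not an equilibrium at all, because against an always-honest aggregator the validator strictly gains by skipping the search and saving $x$. (For the online version the same steps are carried out with symbolic $s_A,s_V,x,z$ subject to $x<s_V\leq s_A\ll z$.)

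The step I expect to be the real obstacle is the corner behaviour of $g$: one cannot simply posit ``$A$ indifferent'' and ``$V$ indifferent between search and no search'' as two equations in three unknowns and solve, since the resulting curve need not consist of equilibria --- the blind-challenge probability must itself be a best response. So the argument has to case-split on the sign of $h-\tfrac13$ (equivalently on whether $g=0$, $g=1$, or $h=\tfrac13$), discard the inconsistent branches ($g=1$ would make dishonesty strictly worse than honesty, forcing $h=1$; $h=\tfrac13$ makes the blind branch worthless and forces $V$ to pure search), and only then extract the surviving profile. Everything after that case analysis is the routine substitution sketched above.
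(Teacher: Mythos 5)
Your proof is correct, and it takes a genuinely different --- and more demanding --- route than the paper's. The paper fixes $b=\tfrac15$, obtains $g=\tfrac45$ from the aggregator's indifference condition \cref{g_simple}, and then chooses $h=\tfrac{67}{96}$ by setting the validator's \emph{total} expected utility \cref{v_utility} to zero. You instead insist that every component of the validator's behaviour strategy be a best response, which forces the case analysis on the blind information set: an interior $g$ requires $h(-s_V)+(1-h)\tfrac12 s_A=0$ (i.e.\ $h=\tfrac13$ under $s_A=s_V=s$), which is incompatible with the search/no-search indifference $(1-h)\tfrac12 s_A=x$ (i.e.\ $h=\tfrac{11}{12}$), so $g$ must sit at a corner; discarding $g=1$ and $h=\tfrac13$ leaves $g=0$, whence $b=\tfrac1{25}$ and $h=\tfrac{11}{12}$. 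Your verification that $(h,b,g)=(\tfrac{11}{12},\tfrac1{25},0)$ is a Nash equilibrium checks out: the aggregator's dishonest payoff is $\tfrac{1}{25}\cdot 24s-\tfrac{24}{25}s=0$, the blind-challenge payoff $-\tfrac{21}{24}s$ is strictly negative so $g=0$ is optimal, and both branches of \cref{v_utility} equal $0$ so any $b$ is optimal. This is not merely a stylistic difference. Your observation that ``$V$ indifferent overall'' is the participation constraint rather than the best-response condition identifies a real weakness in the paper's argument: at $h=\tfrac{67}{96}>\tfrac13$ the unique best response at the blind information set is $g=0$, not $\tfrac45$, so the paper's profile is not actually a Nash equilibrium even though it satisfies the two displayed indifference equations. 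What the paper's approach buys is a short computation; what yours buys is an equilibrium that survives scrutiny, together with the useful side remark that the intended profile (always search, always honest) fails to be an equilibrium at all because the validator would save $x$ by deviating --- which is the cleanest possible statement of the theorem's point. The only caveat is that your closed-form thresholds ($h=\tfrac13$, $h=\tfrac{11}{12}$) use the specific parameter choices $s_A=s_V=s$, $x=\tfrac{s}{24}$, but those are exactly the hypotheses of the theorem, so nothing is lost.
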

\begin{proof}
From \cref{g_simple}, we can see that a value of $b=\frac{1}{5}$ yields $g=\frac{4}{5}$ and both are viable. Given these values, we can ask, for which value $h$, determined by $A$, is $V$ indifferent. We can now simply plug these values into \cref{v_utility} and set it to 0, yielding,
\begin{align*}
    \frac{1}{5}\left(\frac{4}{5}\left(h\left(-s\right)+\left(1-h\right)\frac{s}{2}\right)\right)+\frac45\left(\left(1-h\right)\frac{s}{2}-\frac{s}{24}\right)=0 
\end{align*}
Solving for $h$ yields $h= \frac{67}{96}$.  Thus, we have found values for which both $A$ and $V$ can force the other to be indifferent to their choice of behavior.  Thus we have found an equilibrium not in keeping with the desired functioning. \qed \end{proof}

\section{Potential Solutions}
The verifier's dilemma has yet to become a practical problem because existing optimistic rollups have yet to decentralize their sequencers. While this is one potential solution, it is important to ask how decentralization might be achieved without compromising the security of the chain. Another trivial solution would be to revert to a mechanism that requires the validators to check the integrity of the block every time, as is done with proof of stake L1s and zero knowledge rollups. We present here an outline of potential solutions that are more in keeping with the spirit of optimistic rollups.

\subsection{Random Checks}
If we were to stipulate that each block be checked, we would stray away from the spirit of optimistic rollups and lose their lightweight efficiency.  Requiring intermittent, random checks could keep the optimistic efficiency without the hazard of the verifier's dilemma.  Whether such a set up would indeed be meaningfully more efficient than simply checking every time would depend on the relative value of the potential benefit from attack and the potential penalties. It would also depend on precisely how the random checks would be implemented. This would not necessitate fraud proofs, which are needed to find precisely where the chain became faulty, and instead would require proofs of just the block in question.  These proofs could be done by nodes, which would again need incentives, or by smart contracts on the L1, which could have expensive execution that might fall on the users or be paid for with inflation. If we assume this could be done with a smart contract without too much additional cost to the user, we get a game where $C$ is the contract `player' in the place of $V$.  Let us also define $p$, where $0<p<1$, to be the probability with which the contract checks the proof.

\begin{lemma}
    There are values of the probability of check, $p$, $s_A$, and $z$ such that a rational, risk neutral aggregator will always be honest.
\end{lemma}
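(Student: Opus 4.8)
The plan is to reduce the contract game to a single decision node for the aggregator and then compare two expected payoffs, exactly as in the proof of \cref{thm:optimistic_perf_info}. After the normalization $f-w=0$, an honest aggregator receives utility $0$ no matter what the contract does. A dishonest aggregator is caught precisely when the contract checks, which happens with probability $p$; in that event the faulty commitment is revoked and the stake slashed, for utility $-s_A$, and with the complementary probability $1-p$ the attack succeeds and yields $z$. Hence the dishonest branch has expected utility
\[
 -p\,s_A + (1-p)\,z .
\]

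First I would impose the condition that makes honesty strictly preferred, namely $-p\,s_A + (1-p)\,z < 0$, which rearranges to
\[
 p > \frac{z}{z+s_A}.
\]
Since $s_A>0$, the right-hand side lies strictly between $0$ and $1$, so the open interval $\left(\tfrac{z}{z+s_A},\,1\right)$ is nonempty and every $p$ in it is a legal probability. For any such $p$ the unique best response of a rational, risk-neutral aggregator is to behave honestly, which is the claim. To make the statement concrete I would exhibit one triple obeying the ambient ordering $s_A \ll z$ used elsewhere in the paper — e.g.\ $s_A = s$, $z = 24s$, giving threshold $\tfrac{24}{25}$, so that $p = \tfrac{99}{100}$ works.

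The calculation itself is routine; the only points that need care are (i) keeping the chosen $p$ inside $(0,1)$, which is why the lemma quantifies over $s_A$ and $z$ rather than fixing them adversarially large relative to a capped $p$, and (ii) using a strict inequality so that honesty is the \emph{unique} pure best response, matching the notion of security adopted in this paper. The main conceptual obstacle — such as it is — is justifying that the contract may be treated as a non-strategic device whose only parameter is $p$, so that the aggregator's incentive is the whole story; this is exactly the modelling assumption stated just before the lemma, so no fixed-point or equilibrium-existence argument is needed here, in contrast to \cref{main}.
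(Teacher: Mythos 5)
Your proposal is correct and follows essentially the same route as the paper: compute the dishonest branch's expected utility $(1-p)z - p\,s_A$, require it to be negative, and solve to obtain $p > \frac{z}{z+s_A}$, noting this threshold lies in $(0,1)$. The concrete numerical instance you add is a harmless elaboration, not a different argument.
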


\begin{center}
\begin{istgame}[font=\footnotesize]
\xtShowEndPoints
\xtdistance{15mm}{40mm}
\istroot(0)[initial node]{A}
    \istb{Honest}[al]
    \istb{Dishonest}[ar]
\endist

% left child
\xtdistance{15mm}{20mm}
\istroot(1)(0-1)<135>{$C$}
    \istb{Check}[al]{0}
    \istb{No}[ar]{0}
\endist 

%\xtdistance{15mm}{10mm}
\istroot(2)(0-2)<above right>{$C$}
    \istb{Check}[al]{-s_A }
    \istb{No}[ar]{z}
\endist
\xtInfoset(1)(2)
\end{istgame}

Game 3: Game tree illustrating a random check and the utility of $A$.% the dotted line indicates an information set.
\end{center}
It is immediate that the expected income on the left child of the root is zero.  Recall this is because we normalized it so.  Since the left, honest choice on the part of $A$ is the desired outcome, we need a $p$ such that the right side yields negative income in expectation.  Thus we would need, $(1-p)z -ps_A<0$. Solving for $p$ readily yields $p>\frac{z}{z+s_A}$.  This is clearly in keeping with our requirement that $0<p<1$, but is also clear that, given $s_A \ll z$, $p$ will be close to $1$.  Thus, there are only minor savings as compared to checking every time, but there are values for which such a setup would result in a rational, risk neutral $A$ behaving honestly, as desired. \qed 

\subsection{Random Mistakes}
The motivating story for the verifier's dilemma is a scenario in which the aggregator makes no mistakes and thus removes the incentive to ever check.  This in turn leads to an opening for mistakes or malice on the part of the aggregator. Adding mistakes by design thus seems like a potential way to incentivize watchfulness on the part of the validators. This might pose a practical problem as presumably only some of the aggregator nodes, say those belonging to the foundation associated with the chain, would be making these purposeful mistakes and validators might then only check the blocks proposed by those nodes. Furthermore some entity would have to incur the cost of the stakes slashed when the mistakes are caught.

\subsection{Easter Eggs}
If the search costs, $x$, are relatively small, compensating the validators directly for their services seems like a tempting solution, and has been considered by Arbiturm \cite{arbitrum}.  However, it may not be clear whether or not the search is being performed correctly and checking this search would be yet another role with its own incentives. If it were possible to hide reward `Easter eggs,' which could be found through proper searching but not a random challenge, this could act as a delivery mechanism for validator rewards. These rewards would be realized even when the aggregator behaves honestly.  

\begin{lemma}
    Compensating the validator with some term with expected value $y\geq x$ will result in the expected income for the validator of always searching being higher than not searching.
\end{lemma}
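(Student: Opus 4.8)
The plan is to carry over Game~2 and model the Easter egg as a payoff term of expected value $y$ that is credited to $V$ on \emph{exactly} the branch where $V$ plays \emph{Search}, since by hypothesis such a reward is recoverable only through a genuine search and never through a blind challenge. This shifts each of the four leaves of the right-hand (Search) subtree by adding $y$ to $V$'s coordinate, i.e.\ it replaces the right-hand summand of \cref{v_utility} by $(1-b)\left((1-h)\tfrac12 s_A - x + y\right)$, and leaves the left-hand (No Search) summand untouched. Specializing $b=0$ gives the expected income of \emph{always searching}, $E_{\mathrm{s}} = (1-h)\tfrac12 s_A - x + y$, where I reuse the observation made in deriving \cref{v_utility} that a searching validator has perfect information and therefore challenges precisely when $A$ is dishonest. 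Specializing $b=1$ gives the expected income of \emph{never searching}, $E_{\mathrm{ns}}(g) = g\left(h(-s_V) + (1-h)\tfrac12 s_A\right)$ for whatever blind-challenge probability $g\in[0,1]$ the validator would adopt; the Easter-egg term correctly disappears here.

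The heart of the argument is then a short bounding step, and there are two equivalent ways to see it. The conceptual route is that, comparing leaf-by-leaf, $V$'s Search payoffs are exactly the No Search payoffs shifted by the constant $-x+y$; since a searcher observes $A$'s type and best-responds while a non-searcher must commit to $g$ beforehand, the value of information is nonnegative, so $E_{\mathrm{s}} \ge E_{\mathrm{ns}}(g) - x + y \ge E_{\mathrm{ns}}(g)$, the last step being the hypothesis $y\ge x$. The direct route, matching the paper's style, is to note $h(-s_V)\le 0$, so the bracket in $E_{\mathrm{ns}}(g)$ is at most $(1-h)\tfrac12 s_A$, and hence $E_{\mathrm{ns}}(g)\le(1-h)\tfrac12 s_A$ for every $g\in[0,1]$ (splitting on whether the bracket is nonnegative or negative); chaining $E_{\mathrm{ns}}(g)\le(1-h)\tfrac12 s_A\le(1-h)\tfrac12 s_A - x + y = E_{\mathrm{s}}$ again uses $y\ge x$ only at the final inequality. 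Either way the conclusion holds for every belief $h\in[0,1]$ and every $g\in[0,1]$, and the inequality is strict whenever $y>x$ (or, at $y=x$, whenever $h>0$, or $h=0$ with $g<1$, etc.), which suffices for the ``higher than'' claim.

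I expect the main obstacle to be the modeling step rather than the algebra: one must justify that the reward genuinely attaches only to the \emph{Search} branch and perturbs none of the challenge-versus-no-challenge comparisons in a way that would flip $V$'s downstream best responses (it does not, since adding a constant $y$ to all four Search leaves cannot change which of them is largest within a given information set). Once that is granted, the payoffs shift by the constant $y$ and the comparison collapses to the two-line bound above; it is also worth remarking that this argument does not by itself make \emph{full search and honesty} the unique equilibrium, only that it removes $V$'s incentive to go blind, so it should be presented as a necessary design ingredient rather than a complete fix.
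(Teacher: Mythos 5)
Your proposal is correct and takes essentially the same route as the paper: both add $y$ only to the Search branch, read off the $b=0$ and $b=1$ specializations of the validator's expected utility from \cref{v_utility}, and conclude from $g\le 1$ together with $-x+y\ge 0\ge -ghs_V$. Your ``direct route'' is in fact a slightly more careful version of the paper's two-line bound (the paper's displayed inequality drops an $s_A$ factor and asserts strictness loosely), and your closing caveats about strictness and non-uniqueness are sensible additions rather than a different argument.
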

Let us add $y$, the expected value of finding the Easter egg, to the validator income in the case of search, i.e. when $b=0$, which we can read off from \cref{v_utility}.  This should yield greater expected utility than never searching, i.e. when $b=1$.  
\begin{align*}
    (1-h)\frac12 s_A -x+y >-ghs_V +g(1-h)\frac{1}{2}
\end{align*}
since $g<1$, it is clear to see that if we have $-x+y >-ghs_V$, the above term is always true. If we require that $y \geq x$, the result is immediate. \qed

%define the following sections to hide their Section Number (Notes Style)
%\ledgernotes

% \section*{Acknowledgements} 

% FAL and FQL thank Fancy Scholar for his 
% invaluable comments on several drafts of this paper. 
% FAL acknowledges the University of Central Wakanda for their financial support.

% \section*{Author Contributions}

% There is one author.

% \newpage

%AUTHOR: comment out if using thebibliography
%\theendnotes

%AUTHOR: please read ledgerbib.bst usage notes by opening it in a text editor. We have modified it to include the use of the @misc item type for the proper formatting of online sources.

%\bibliographystyle{ledgerbib}
% \bibliography{tempbib}

% \bibliographystyle{u}
% \bibliography{refs}

\printbibliography

\section*{Appendix}

\begin{lemma}
    A rational, risk neutral $A$ will be indifferent to being honest if $g= 1- \frac{s_A}{bs_A +bz}$.
\end{lemma}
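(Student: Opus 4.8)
The plan is to replay the argument of Lemma 2 verbatim, but keep the parameters $s_A$ and $z$ symbolic rather than substituting the convenient numerical values. First I would recall the aggregator's expected-utility expression \cref{a_utility}: honest play yields $0$ by the normalization $f-w=0$, while dishonest play yields $b\bigl(g(-s_A)+(1-g)z\bigr)+(1-b)(-s_A)$, since a blind $V$ (probability $b$) challenges with probability $g$ and an informed $V$ (probability $1-b$) challenges for certain against a dishonest $A$, contributing exactly $-s_A$ on that subtree.

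Next I would impose indifference by setting this dishonest payoff equal to $0$, the honest payoff, and solving the resulting linear equation in $g$. Expanding gives $-bgs_A + bz - bgz - s_A + bs_A = 0$; collecting the $g$-terms yields $bg(s_A+z) = b(s_A+z) - s_A$, and dividing by $b(s_A+z)$, which is positive, produces $g = 1 - \frac{s_A}{bs_A + bz}$, exactly as claimed. As a sanity check, specializing $s_A = s_V = s$ and $z = 24s$ collapses this to $g = 1 - \frac{1}{25b}$, recovering \cref{g_simple} and hence Lemma 2.

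Finally I would record the admissibility conditions, as in the special case: since $\frac{s_A}{bs_A+bz} > 0$ we automatically have $g < 1$, while requiring $g \ge 0$ forces $b \ge \frac{s_A}{s_A+z}$, a small positive threshold under the ordinal assumption $s_A \ll z$ — the general analogue of the bound $b > \frac{1}{25}$ obtained before.

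There is no real mathematical obstacle here: the computation is a one-line linear solve. The only care needed is (i) correctly identifying which branch of \cref{a_utility} is the dishonest payoff and justifying why the informed-$V$ subtree contributes precisely $-s_A$ (it challenges iff $A$ is dishonest), and (ii) justifying the division by $b(s_A+z)$, which is why the positivity/sign remarks belong in the write-up rather than being skipped.
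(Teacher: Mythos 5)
Your proposal is correct and follows essentially the same route as the paper's own proof in the appendix: write down the dishonest expected payoff $b\bigl(g(-s_A)+(1-g)z\bigr)+(1-b)(-s_A)$, set it equal to the normalized honest payoff of $0$, solve the linear equation for $g$, and note the viability condition $b>\frac{s_A}{s_A+z}$. The added sanity check against the special case $z=24s$ and the explicit justification for dividing by $b(s_A+z)$ are harmless refinements of the same argument.
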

\begin{proof}
    It is easy to read off from Game 2 that if the aggregator is honest, they will receive $0$ utility, regardless of $V$'s actions.  Recall this is due to our normalization of $f-w$ to $0$. 
    
    Suppose temporarily that $A$ knew that $V$ was not doing searches and was challenging randomly with a probability $g$.  If they were then dishonest, they would then have utility of $g(-s_A)+(1-g)z$.
    Let $b$ be the probability that $V$ \emph{blindly} challenges, i.e. incurs no search costs. Honest behavior still results in $0$ and dishonest behavior results in expected utility of 

    $$
        b\cdot(g\cdot(-s_A)+(1-g)z)+(1-b)(-s_A).
    $$
    This is the sum, weighted by $b$, for $V$ going in blind, selecting challenge with probability $g$, and by $(1-b)$ for $V$ performing a search and catching $A$ with certainty.

    If we set this equal to $0$, which is the expected utility to $A$ of behaving honestly, we get 
    \begin{align*}
        b\cdot(g\cdot(-s_A)+(1-g)z)+(1-b)(-s_A)=0 \\
        -bgs_A + bz -bgz -s_A +bs_A = 0 
    \end{align*}
    Solving for $g$ yields
    \begin{align*}
        bz-s_A+bs_A =  g\cdot(bs_A +bz)\\
        g = \frac{bz-s_A+bs_A}{bs_A +bz} 
    \end{align*}
We can rearrange this slightly differently.
    \begin{align}
        g= \frac{bz+bs_A}{bs_A +bz} -\frac{s_A}{bs_A +bz} \nonumber \\
        g= 1- \frac{s_A}{bs_A +bz} \label{g_bound}
    \end{align}
This is a line that traces out the values of $g$ and $b$ for which the aggregator is indifferent between being honest or dishonest.

In order for this to be a viable scenario, we must have that $0<b<1$ and $0<g<1$.  This requirement will clearly be broken if \cref{g_bound} is less than $0$.  Thus, we have a viable $g$ if,
\begin{align*}
    0 < 1- \frac{s_A}{bs_A +bz} \\
    \frac{s_A}{bs_A +bz} <  1 \\
    \frac{s_A}{s_A +z} < b
\end{align*}
It is easy to see that this can be satisfied with a $b$ between $0$ and $1$.  Given that $z \gg s_A$, this bound is easily achieved in context. \qed \end{proof}
\begin{lemma} \label{h bound}
    There is a value for $h$, the probability with which the aggregator is honest, for which the validator is indifferent between search and no search.
\end{lemma}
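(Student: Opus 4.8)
The plan is to read the validator's incentive condition directly off \cref{v_utility} and solve it for $h$. Fix the aggregator's honesty probability $h$ and the blind challenge probability $g$; then \cref{v_utility} is an affine function of the no-search probability $b$, namely a convex combination of the no-search continuation payoff $g\bigl(h(-s_V)+(1-h)\tfrac{1}{2}s_A\bigr)$ and the search continuation payoff $(1-h)\tfrac{1}{2}s_A-x$. Hence the validator is indifferent between searching and not searching --- and thus willing to use any $b\in(0,1)$ --- exactly when these two continuation payoffs coincide:
\begin{align*}
    g\Bigl(h(-s_V)+(1-h)\tfrac{1}{2}s_A\Bigr)=(1-h)\tfrac{1}{2}s_A-x .
\end{align*}
Both sides are linear in $h$, so I would simply isolate $h$, obtaining
\begin{align*}
    h=\frac{s_A(1-g)-2x}{s_A(1-g)-2g\,s_V},
\end{align*}
a well-defined real number whenever the denominator does not vanish.

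The substantive step is then to check that this $h$ is an admissible probability and that it can be realised together with a $g$ lying on the aggregator-indifference curve \cref{g_bound}, so that, together with the preceding lemma, one obtains a mixed profile in which $A$ is indifferent between honesty and dishonesty and $V$ is indifferent between searching and not. I would do this by examining the regime $g\to 1$: along \cref{g_bound}, $g=1-\tfrac{s_A}{b(s_A+z)}$ can be pushed arbitrarily close to $\tfrac{z}{s_A+z}$, which is itself close to $1$ under the standing constraint $z\gg s_A$. In that regime $s_A(1-g)$ becomes negligible next to both $2x$ and $2g\,s_V$, so the displayed expression tends to $\tfrac{x}{s_V}$, which lies strictly between $0$ and $1$ because $x<s_V$. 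By continuity of $h$ as a function of $g$ there is an interval of values of $g$ --- equivalently of $b$, via \cref{g_bound} --- on which $h\in(0,1)$; and since $\tfrac{s_A}{s_A+z}$ is tiny compared with the competing thresholds, that interval is compatible with the range of $g$ that the curve actually attains.

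The main obstacle is exactly this feasibility bookkeeping. The indifference equation on its own does not force $h\in[0,1]$: the denominator $s_A(1-g)-2g\,s_V$ changes sign as $g$ increases, so for some admissible pairs $(b,g)$ --- for instance the values used in \cref{main} --- the bare solution lands outside $[0,1]$. One therefore has to coordinate the choice of $g$ with the ordinal constraints $x<s_V\le s_A\ll z$ and with the lower bound on $b$ inherited from the aggregator's indifference curve, and verify that the admissible region is nonempty; the $g\to 1$ limit is the device that makes this transparent. If one prefers the bookkeeping of \cref{main}, the alternative is to equate the validator's entire payoff in \cref{v_utility} to its outside option of $0$ and solve for $h$, which yields $h=\dfrac{s_A(1-b+bg)-2(1-b)x}{s_A(1-b+bg)+2bg\,s_V}$; here the denominator is manifestly positive and the numerator never exceeds it, so $h\le 1$ is automatic and only the single inequality $s_A(1-b+bg)\ge 2(1-b)x$ --- immediate under the standing constraints --- remains to be checked.
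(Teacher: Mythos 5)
Your proposal is correct in substance but takes a genuinely different route from the paper. You impose the literal indifference condition of the lemma --- equality of the two continuation payoffs $g\bigl(h(-s_V)+(1-h)\tfrac12 s_A\bigr)$ and $(1-h)\tfrac12 s_A - x$ --- which yields an $h$ depending on $g$ alone, namely $h=\frac{s_A(1-g)-2x}{s_A(1-g)-2g s_V}$. The paper instead sets the validator's \emph{total} expected payoff in \cref{v_utility} equal to the outside option $0$ and solves that single equation for $h$, obtaining the $b$- and $g$-dependent expression you record as your alternative. These are different equations: as you correctly observe, your branch-level condition fails to produce an admissible probability at the very parameter values ($g=\tfrac45$, $s_A=s_V=s$, $x=\tfrac{s}{24}$) used in \cref{main}, whereas the paper's total-payoff condition gives $h=\tfrac{67}{96}$ there. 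What your version buys is fidelity to the stated lemma (indifference between search and no search) at the cost of a nontrivial feasibility analysis, which you handle by pushing $g$ toward its supremum $\tfrac{z}{s_A+z}$ along \cref{g_bound} and invoking $x<s_V$ and $z\gg s_A$; this is sound, though it quietly needs $z$ large relative to $s_A^2/x$, not merely to $s_A$. What the paper's version buys is near-automatic feasibility (positive denominator) and direct compatibility with the computation in \cref{main}; the cost is that "total payoff equals zero" is, strictly speaking, indifference to the zero-payoff option rather than between the search and no-search branches, so your reading is arguably the more faithful one. Both arguments establish the existence claim of the lemma.
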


\begin{proof}
    Referring again to Game 2 and starting on the left side, where $V$ chooses no search, we can see that the validator reaps $0$ utility if they do not challenge. In the challenge case, if $A$ is honest, $V$ has their stake burned for utility $-s_V$.  In the dishonest case, they receive $\frac12 s_A$.  Weighting this with the probabilities, we find the left hand side has expected utility 
    $$
        g\cdot\left(h\cdot(-s_V) + (1-h)\frac12 s_A\right).
    $$
    On the right hand side, $V$ can now make an informed decision.  If $A$ is honest, $V$ will not challenge, resulting in utility $-x$ and if $A$ is dishonest, $V$ will always challenge for utility $\frac12 - x$, which represents the bounty less the search cost.  Weighting these terms by the honesty probability yields the right hand side utility of 
    $$
        h\cdot(-x)+ (1-h)\left(\frac12 s_A -x\right)
    $$
    which can be simplified to
    $$
        (1-h)\left(\frac12 s_A \right)-x    
    $$

    To get the whole expected utility, we weight both sides of the tree by the probability of no search $b$, which gives,
    $$
         bg\cdot\left(h\cdot(-s_V) + (1-h)\frac12 s_A\right)+(1-b)\left((1-h)\frac12 s_A -x\right).
    $$
We then set this to $0$ to find the $h$ for which the validator is indifferent. 
\begin{align}
     bg\cdot\left(h\cdot(-s_V) + (1-h)\frac12 s_A\right)+(1-b)\left((1-h)\frac12 s_A -x\right)=0 \nonumber \\
     -bghs_V+bg\frac12 s_A- bgh\frac12 s_A + (1-b)\frac12 s_A - (1-b)h\frac12 s_A-(1-b)x=0 \nonumber \\
     bg\frac12 s_A+ (1-b)\frac12 s_A-(1-b)x=h\cdot \left(bgs_V + bg\frac12 s_A  + (1-b)\frac12 s_A\right) \nonumber \\
     h= \frac{bg\frac12 s_A+ (1-b)\frac12 s_A-(1-b)x}{bgs_V + bg\frac12 s_A  + (1-b)\frac12 s_A} \label{h_bound}
\end{align}
Which gives us a condition on $h$ as desired.  \qed
\end{proof}

We can combine \cref{g_bound} and \cref{h_bound} to find values for which both conditions hold.  Since we have three variables but only two equations, we will not have a single point for our equilibrium.

\begin{theorem} \label{main}
    There are values of $s_A$, $s_V$, $z$, and $x$ that admit a mixed equilibrium that is not the same as that found in Lemma 2 if both $A$ and $V$ are rational and risk neutral.
\end{theorem}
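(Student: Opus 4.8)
The plan is to fuse the two indifference conditions already isolated in the appendix into one underdetermined system and then exhibit an explicit solution inside $(0,1)^3$. From the first appendix lemma we have the locus \cref{g_bound}, which records, for each blind-challenge probability $b$, the value of $g$ that makes a risk-neutral $A$ exactly indifferent between honesty and dishonesty; from the second appendix lemma we have \cref{h_bound}, the honesty probability $h$ that drives $V$'s expected payoff, as computed in \cref{v_utility}, down to its outside-option value $0$ --- that is, the $h$ at which $V$ is indifferent about participating under the strategy $(b,g)$. Imposing both conditions at once leaves the three strategic probabilities $b,g,h$ constrained by only two equations, so the candidate equilibria form a one-parameter family, and it suffices to pin down one point of this family for a single admissible choice of the economic parameters.

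Concretely, I would: (i) fix parameters obeying the ordinal constraint $x<s_V\le s_A\ll z$ --- the main-text choice $s_A=s_V=s$, $x=\tfrac{s}{24}$, $z=24s$ is convenient and already collapses \cref{g_bound} to $g=1-\tfrac{1}{25b}$ (cf.\ \cref{g_simple}); (ii) pick a value of $b$ safely above $\tfrac{1}{25}$, so that the induced $g$ lies in $(0,1)$, and --- to keep the profile distinct from the one exhibited in the main text --- different from $\tfrac{1}{5}$; (iii) compute $g$ from \cref{g_bound} and then use \cref{h_bound} to read off $h$, checking $h\in(0,1)$; (iv) note that the resulting profile has $b>0$ and $h<1$, hence is neither the intended behaviour (``$V$ always searches, $A$ always honest'') nor the specific profile $(b,g,h)=(\tfrac{1}{5},\tfrac{4}{5},\tfrac{67}{96})$ of the main-body theorem. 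It then remains to confirm the no-deviation conditions: $A$ is indifferent across its two pure actions by construction of $g$; $V$ is indifferent between its mixed participation strategy and opting out by construction of $h$; and on the search branch of Game 2 the informed reply ``challenge exactly when $A$ is dishonest'' is a strict best response because $s_A>0$ and $x>0$, so no node of Game 2 offers $V$ a profitable deviation there.

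The delicate point --- and where I expect the real care to be needed --- is the equilibrium check for $V$, who acts twice: first whether to search, then, if blind, whether to challenge. A fully textbook Nash argument would additionally demand that $V$ be indifferent between Challenge and No inside the blind information set whenever $g$ is strictly interior, which forces $h=\tfrac{s_A}{2s_V+s_A}$ and in general clashes with the $h$ produced by \cref{h_bound} unless the parameters are tuned just so. There are two clean escapes, and I would adopt one: either (a) treat search/no-search as $V$'s single genuinely mixed decision, choose the economic parameters so that $V$ is truly indifferent there, and let $g$ sit at a boundary best response ($g=0$ or $g=1$) --- for a suitable $x$ this already yields a bona fide mixed Nash equilibrium, e.g.\ $(b,g,h)=(\tfrac{1}{25},0,\tfrac{11}{12})$ at $s_A=s_V=s$, $x=\tfrac{s}{24}$, $z=24s$; or (b) work, exactly as the main text does, with the weaker solution concept in which $V$ only randomises between participating under a fixed $(b,g)$ and the zero-payoff outside option. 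The algebra in steps (ii)--(iii) is routine; essentially all the content is in choosing the parameters and the solution concept so that every probability lands strictly inside $(0,1)$ and no hidden deviation for $V$ survives.
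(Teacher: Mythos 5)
Your proposal follows the same skeleton as the paper's appendix proof --- combine the two indifference loci \cref{g_bound} and \cref{h_bound}, note that three probabilities constrained by two equations leave a one-parameter family, and exhibit one admissible point under $x<s_V\le s_A\ll z$ --- so on that score you and the author agree. Where you genuinely diverge, and improve on the paper, is precisely the ``delicate point'' you flag: the paper's condition on $h$ is obtained by setting $V$'s \emph{total} expected utility to zero (indifference with the outside option), not by equating the payoffs of the actions $V$ is actually randomizing over. For $V$ to mix with $g\in(0,1)$ at the blind information set, one needs $h=s_A/(2s_V+s_A)$, which with the illustrative parameters is $\tfrac{1}{3}$, not the $\tfrac{67}{96}$ of the main text; at $h=\tfrac{67}{96}$ a blind challenge has strictly negative expected value, so $V$ profitably deviates to $g=0$ and the paper's profile is not a Nash equilibrium in the textbook sense. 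Your escape (a) --- push $g$ to the boundary, keep $V$ indifferent only between search and no search, and verify $(b,g,h)=(\tfrac{1}{25},0,\tfrac{11}{12})$ --- is the right repair: there $A$'s dishonest payoff is $\tfrac{1}{25}(24s)-\tfrac{24}{25}s=0$, both of $V$'s top-level options pay $0$, not challenging blindly is a strict best response since $h(-s)+(1-h)\tfrac{s}{2}<0$, and the informed replies on the search branch are strict best responses, so every no-deviation check passes while $b$ and $h$ remain strictly interior. I would adopt (a) and state explicitly that route (b), the paper's own solution concept, does not by itself certify an equilibrium; with that choice made, your argument is complete and in fact stronger than the one in the appendix.
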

\begin{proof}
    We can plug  \cref{g_bound} into \cref{h_bound} to attain a simplified closed form solution,
    \begin{align}
           h=\frac{\frac12 s_A z- (1-b)(zx+s_Ax)}{\frac12 s_Az-s_A s_V +(z+s_A)s_Vb}, \label{combined_bound}
    \end{align}

after some tedious manipulations that we will omit.  In context, it is reasonable to assume that the search costs are relatively small. If one stake is larger it would be that of the aggregator. The aggregator can extract a large amount of value, $z$.  This yields the ordinal constraint:
$$
    x<s_V\leq s_A \ll z
$$
We must also have that $0<h<1$ and $0<b<1$.  We can think of the right hand side of \cref{combined_bound} as a function in terms of $b$. In order for their to be a mixed equilibrium that is not a pure strategy, we need there to be a viable value of $h(b)$, i.e. between $0$ and $1$, for some value $b$ where $0<b<1$. In order for that to be the case, we must have that the denominator is larger than the numerator and for both to have the same sign. Starting with the numerator of \cref{combined_bound}, we can see that it will be positive when
\begin{align*}
    \frac12 s_A z-zx-s_Ax + (zx+s_Ax)b > 0 \\
    \left(\frac12 s_A -x  + xb \right)z -s_Ax + s_Axb > 0
\end{align*}

We can see that if the term $\frac12 s_A -x  + xb $ is positive, then, given that $z$ is \emph{very large}, this inequality will be satisfied.  It is easy to see that this term will be positive if 
\begin{align*}
    \frac12 s_A -x  + xb> 0 \\
    \frac12 s_A > (1-b)x
\end{align*}
This is a reasonable restraint given that the stakes is considerably larger than the search cost, $x$.

Turning our attention to the denominator, it is easy to read off that the term will be positive if $\frac12 z >s_V$.  This will be the case again given that $z$ \emph{very large}.

We will now have that the whole term for $h$ is between $0$ and $1$ if the numerator is less than the denominator. That is, 
\begin{align*}
    \frac12 s_A z- (1-b)(zx+s_Ax) < \frac12 zs_A-s_A s_V +(z+s_A)s_Vb 
\end{align*}
which can be simplified to 
\begin{align*}
    (b-1)(zx+s_Ax) < zs_Vb+(b-1)s_As_V \\
    bzx-zx+bs_Ax+s_Ax < zs_Vb+bs_As_V - s_As_V
\end{align*}
Once again solving for $b$ yields
\begin{align}
    s_As_V-s_Ax-xz<&b(zs_v+s_As_V-xz-s_Ax) \nonumber \\
    \frac{s_As_V-s_Ax-zx}{s_As_V-s_Ax-zx +zs_V}<&b \label{last_b_bound}
\end{align}
We see that this constraint on be conforms to the requirement that $0<b<1$. Thus there are values for $g$, $h$, and $b$ all strictly between $0$ and $1$ that admit an equilibrium. This is not the pure strategy equilibrium from Lemma 2 that is intended by the optimistic design. \qed \end{proof}

\subsection{The Role of the Transactor}
\begin{lemma}
    A rational, risk neutral transactor will still participate if their potential utility $u_T$ is sufficiently high.
\end{lemma}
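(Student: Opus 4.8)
The plan is to write the transactor's expected utility as a function of $u_T$, using the equilibrium behaviour of $A$ and $V$ established above, and then to observe that this utility is affine in $u_T$ with a strictly positive slope, so that it exceeds the transactor's outside option of $0$ once $u_T$ is large enough.

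First I would fix the relevant equilibrium --- either the pure-strategy equilibrium of \cref{thm:optimistic_perf_info} or the mixed equilibrium of \cref{main}, in which $A$ is honest with some probability $h \in (0,1]$ --- and trace what the transactor collects in each branch. The transactor pays $f$ up front. If $A$ is honest, which happens with probability $h$, the transaction is committed and the transactor enjoys its value $u_T$, for a net payoff $u_T - f$. If $A$ is dishonest, which happens with probability $1-h$, then either the faulty commitment is challenged and revoked or $A$ absconds; in neither case is the transactor's transaction executed honestly, so here the transactor's payoff is a constant $c \le 0$ (namely $-f$, or $0$ if one models the fee as refundable upon revocation) that does not depend on $u_T$.

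Hence the transactor's expected utility from participating is
\[
    U_T \;=\; h\,(u_T - f) + (1-h)\,c ,
\]
an affine function of $u_T$ with slope $h > 0$. Comparing against the outside option, the transactor strictly prefers to participate whenever $u_T > f - \tfrac{(1-h)c}{h}$, and any $u_T$ above this threshold establishes the claim; in the fully honest equilibrium ($h=1$) this specialises to the familiar condition $u_T > f$.

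The main obstacle is not computational but a modelling point: the transactor's participation decision precedes the subgame of Game 2, so I must verify that adjoining the transactor as a non-strategic root move leaves the equilibrium already computed intact. This holds because $f$ and $w$ are exogenous and the transactor takes no further action, so $A$ and $V$ still face exactly the payoffs of Game 2 and the mixing probabilities (in particular $h$) are unchanged. The substance of the lemma is then just that the coefficient of $u_T$ is the equilibrium honesty probability $h$, which is strictly positive in every equilibrium we exhibit; in the degenerate case $h=0$ no finite $u_T$ would suffice, but that case does not arise here.
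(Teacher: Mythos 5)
Your proposal is correct and follows essentially the same route as the paper: write the transactor's expected utility as an affine function of $u_T$ with slope equal to the equilibrium honesty probability $h>0$, and solve for the participation threshold. The only difference is cosmetic --- you abstract the dishonest-branch payoff into a constant $c\le 0$, whereas the paper computes it explicitly as $-b(1-g)(1-h)f$ (zero when the dishonest aggregator is caught, $-f$ when $V$ is blind and fails to challenge), which is just the special case $c=-b(1-g)f$ of your formula and yields the identical threshold $u_T > f + \frac{b(1-g)(1-h)}{h}f$.
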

\begin{proof}
  We must have that the transactor $T$ is willing to participate rather than take the outside option.  Their utility when $A$ is honest is $u_T -f$.  If $A$ is dishonest, but is caught, $T$ receives $0$ utility.  If $A$ is dishonest, but is not caught, which happens if $V$ is blind and does not challenge, $T$ receives $-f$ utility because they lose their fee.  Combining this with the probability weights yields $h(u_T -f)-b(1-h)(1-g)f$.  When this term is greater than $0$, $T$ will still participate.  We can use this to find a condition on $u_T$
\begin{align*}
    h\cdot(u_T -f)-b\cdot(1-h)(1-g)f > 0 \\
    u_T > f+ \frac{b\cdot(1-g)(1-h)}{h}f
\end{align*}
From this, we can see that as long as $u_t$ is somewhat larger than $f$, and as long as $h$ is not \emph{too small}, the transactor will still participate.    \qed
\end{proof}

What we have shown is that strategies other than those we found in Lemma 2 constitute the equilibrium.  We have found constraints on the model values such that if those constraints are met, the validator does no search $b$ of the time and, given no search, blindly challenges $g$ of the time. The validator is honest only $h$ of the time. The transactor will still participate despite this added uncertainty. The precise values of these depend on the values of the search cost, the stakes, and the potential benefit of getting away with dishonest behavior.

\subsection{Remark}
With \cref{main}, we have shown a result that has been described before, but we believe our model to be a more thorough articulation of the functioning of optimistic rollups.  Our contribution further bolsters the understanding that the verifier's dilemma poses a significant problem to the well-functioning of optimistic rollups.  Of greater consequence, our work also illuminates the roles played by each factor in this problem.  For example, if it were not for the validator stake $s_V$, or if the validator stake were very small, the lower bound on the probability of no search, $b$, would be $1$ or very close to it, as can be seen in \cref{last_b_bound}.  This would mean that the validator would rather incur the penalty and check randomly than incur the search cost and check with information.  Random checking  cannot necessarily be written off as a potential solution, as we will discuss in the next section.  The precise value of $x$ is also important to the existence and severity of the deviation from the expected functionality. This again points towards a potential solution of alleviating this cost for validators, which we will also discuss in the next section. Furthermore, we can see that there are values for which the equilibrium we have found does not necessarily work. Further study of these values could yield a solution within the existing framework or better illuminate the potential risk of decentralization.  All of these observations underscore the importance of the model we present here, which is, to the best of our knowledge, the most accurate and nuanced capture of the optimistic rollup mechanism.

\end{document}